\newtheorem{proposition}{Proposition}
\newtheorem{remark}{Remark}
\theoremstyle{definition}
\def\Q{{\bf Q}}
\def\I{{\bf I}}
\def\A{{\bf A}}
\def\U{{\bf U}}
\def\V{{\bf V}}
\def\F{{\bf F}}
\def\T{{\bf T}}
\def\G{{\bf G}}
\def\H{{\bf H}}
\def\U{{\bf U}}
\def\B{{\bf B}}
\def\C{{\bf C}}
\def\E{{\bf E}}
\def\f{{\bf f}}
\def\g{{\bf g}}
\def\Thetab{\bm{\Theta}}
\def\Lambdab{\bm{\Lambda}}
\def\det{\operatorname{det}}
\def\diag{\operatorname{diag}}
\def\Re{\operatorname{Re}}
\def\rank{\operatorname{rank}}
\def\Exp{\operatorname{E}}
\def\blkdiag{\operatorname{blkdiag}}
\begin{document}

\title{Rate Analysis and Optimization of LoS Beyond Diagonal RIS-assisted MIMO Systems}

\author{Ignacio Santamaria, \IEEEmembership{Senior Member, IEEE}, Jes{\'u}s Guti{\'e}rrez, \IEEEmembership{Member, IEEE}, Mohammad Soleymani, \IEEEmembership{Member, IEEE},
Eduard Jorswieck, \IEEEmembership{Fellow, IEEE}

\thanks{I.~Santamaria is with the Department of Communications Engineering, Universidad de Cantabria, Santander, Spain (e-mail: i.santamaria@unican.es). }
\thanks{M. Soleymani is with the Signal and System Theory Group, Universit{\"a}t  Paderborn, 33098 Paderborn, Germany (e-mail : mohammad.soleymani@sst.upb.de).}

\thanks{E. Jorswieck is with the Institute for Communications Technology, Technische Universität Braunschweig, 38106 Braunschweig,
Germany (e-mail: e.jorswieck@tu-bs.de).}
\thanks{J. Guti{\'e}rrez is with IHP - Leibniz-Institut
f{\"u}r Innovative Mikroelektronik, 15236 Frankfurt (Oder), Germany (e-mail: teran@ihp-microelectronics.com).}

\thanks{This work is supported by the European Commission’s Horizon Europe, Smart Networks and Services Joint Undertaking, research and innovation program under grant agreement 101139282, 6G-SENSES project. The work of I. Santamaria was also partly supported under grant PID2022-137099NB-C43 (MADDIE) funded by MICIU/AEI /10.13039/501100011033 and FEDER, UE.}}

\markboth{IEEE Communication Letters}{}

\maketitle

\begin{abstract}

In this letter, we derive an expression for the achievable rate in a multiple-input multiple-output (MIMO) system assisted by a beyond-diagonal reconfigurable intelligent surface (BD-RIS) when the channels to and from the BD-RIS are line-of-sight (LoS) while the direct link is non-line-of-sight (NLoS). The rate expression allows to derive the optimal unitary and symmetric scattering BD-RIS matrix in closed form. Our simulation results show that the proposed solution is competitive even under the more usual Ricean channel fading model when the direct link is weak.

\end{abstract}

\begin{IEEEkeywords}
Beyond-diagonal reconfigurable intelligent surface, optimization, multiple antennas, line-of-sight channel.
\end{IEEEkeywords}

\IEEEpeerreviewmaketitle

\section{Introduction}
Beyond diagonal RISs (BD-RISs) are being intensively studied lately as they provide enhanced control over the amplitude and phase of the reflecting elements and thus greater flexibility than diagonal RIS \cite{ClerckxTWC22a,ClerckxTWC22b,MaoCL2024,Wymeersch2024Arxiv,khan2025surveydiagonalrisenabled}. The scenarios where RISs provide the most significant gains are those where the direct channel is modeled as weak non-line-of-sight (NLoS) between the transmitter (Tx) and the receiver (Rx) (or may even be obstructed), and Ricean with a predominant line-of-sight (LoS) path \cite{Pei2021LoS} for the channels through the RIS. Several BD-RIS-assisted scenarios have been studied assuming this channel model, but the symmetry and unitarity constraints on the passive BD-RIS scattering matrix usually lead to iterative algorithms with high computational complexity \cite{ClerckxTWC24a, ClerckxTWC22b, SoleymaniTWC2023}.

In this letter, we consider a limiting case of this model in which the channels to and from the BD-RIS are pure LoS and address the problem of maximizing the achievable rate. The first work that studied the capacity in a MIMO link assisted by a diagonal RIS is \cite{ZhangCapacityJSAC2020}, where an alternating optimization (AO) algorithm between the transmit covariance matrix and the phase shifts of the diagonal RIS is proposed. The problem of maximizing the rate of a MIMO link assisted by a BD-RIS is addressed in \cite{SantamariaSPAWC24}, where an AO algorithm is proposed. The BD-RIS optimization solves a sequence of quadratic problems in the manifold of unitary matrices, thus the algorithm in \cite{SantamariaSPAWC24} has a high computational complexity. For single-stream transmission, given the Tx/Rx beamformers the problem of finding the BD-RIS that maximizes the achievable rate is equivalent to maximizing the signal-to-noise ratio (SNR), a problem for which a closed-form solution exists for the BD-RIS \cite{NeriniTWC2023, SantamariaSPLetters2023}. The Tx/Rx beamformers can be optimized through alternate optimization as proposed in \cite{NeriniTWC2023}. More recently, a closed-form solution to the capacity maximization problem in BD-RIS-assisted MIMO systems has been presented in \cite{Emil2024Arxiv}. Interestingly, LoS channels are also common in RIS/BD-RIS assisted sensing scenarios \cite{ShaoJSAC22}. Although space limitations preclude a more detailed review of this and other lines of recent research on BD-RIS, the reader can find comprehensive surveys in \cite{khan2025surveydiagonalrisenabled, Maraqa2025surveyBDRIS}. 

To the best of the authors' knowledge, the rate maximization problem under Tx-RIS-Rx LoS channels has not been analyzed previously. This paper derives a closed-form expression for the rate that can be maximized to obtain the optimal BD-RIS when the channels through the BD-RIS are LoS. Our simulations show that this solution is competitive even with the more usual Ricean channels with a dominant LoS path.

\textit{Notation}: A bold-faced upper case letter, $\A$, is a matrix, a bold-faced lower case letter, ${\bf a}$, is a vector, and a light-faced lower case letter, $a$, is an scalar. $\A^T$, $\A^*$, $\A^H$, $\A^{-1}$, $\det(\A)$ are, respectively, transpose, conjugate, Hermitian, inverse and determinant. $|\cdot|$, $\|\cdot \|$, and $\| \cdot\|_1$ denote the absolute value, the Euclidean $l_2$-norm, and the $l_1$-norm, respectively. $\I_n$ denotes the identity matrix of size $n$, but when there is no need the subindex will be omitted. $ {\cal CN}({\bf 0}, {\bf R})$ is the proper complex Gaussian distribution with zero mean and covariance matrix ${\bf R}$. $\Exp[\cdot]$ denotes mathematical expectation. We use $\angle a$ to denote the angle of the complex number $a$. Finally, $\odot$ denotes Hadamard product. 
\section{System model}
\label{sec:model}
We consider a BD-RIS-assisted multiple-input multiple-output (MIMO) link as depicted in Fig. \ref{fig:DeploymentBDRIS}. The Tx is equipped with $N_T$ antennas, the Rx is equipped with $N_R$ antennas, and the BD-RIS has $M$ elements. The equivalent MIMO channel is
\begin{equation}
\H = \H_d + \F \Thetab \G^H,
\label{eq:MIMOchanneleq}
\end{equation}
where $\G \in \mathbb{C}^{N_T \times M}$ is the channel from the Tx to the BD-RIS, $\F \in \mathbb{C}^{N_R \times M}$ is the channel from the BD-RIS to the Rx, $\H_d \in \mathbb{C}^{N_R \times N_T}$ is the MIMO direct link, and $\bm{\Theta}$ is the $M \times M $ BD-RIS matrix. In this paper, $\G$ will sometimes be referred to as the forward channel and $\F$ as the backward channel. We consider the following feasibility set for the fully-connected BD-RIS \cite{ClerckxTWC22a}, \cite{ClerckxTWC22b}
\begin{equation*}
{\cal{T}} = \{ \Thetab = \Thetab^T, \Thetab^H \Thetab = \I_M \}. 
\end{equation*}

The BD-RIS is deployed at sufficient height to ensure a direct LoS from the Tx and to the Rx. Therefore, the forward and backward BD-RIS channels are assumed to be pure LoS channels. However, the direct MIMO channel, $\H_d$, is assumed to be an NLoS channel with a rich multipath scattering. LoS models are common in RIS-assisted scenarios \cite{wu2019towards,ZhangRankOneWCL2021,OzdoganWCL2020,EsmaeilbeigSPL2025,Utschik2024Arxiv}. In the mmWave band, the LoS component is dominant as the path loss of the NLoS channel is usually much higher \cite{AkdenizJSAC14}, thus making the LoS approximation reasonable in such bands. Another situation that gives rise to LoS channels is when the RIS is close to the Tx or Rx. This scenario is considered in \cite{ZhangRankOneWCL2021}, where a channel estimation algorithm exploiting the rank-1 structure of $\G$ is proposed. A limiting case of this situation is that of transmissive RISs \cite{ZengCL2021}, in which the RIS is deployed very close to the Tx and the transmitted signal can penetrate the RIS and serve users on the opposite side. Finally, it should be noted that scenarios in which the NLoS direct Tx-Rx channel, $\H_d$, is weak and the forward and backward RIS channels, $\G$ and $\F$, are LoS, are those scenarios in which RISs provide more significant gains, as experimentally shown in \cite{Pei2021LoS}.
\begin{figure}
    \centering
\includegraphics[width=0.5\textwidth]{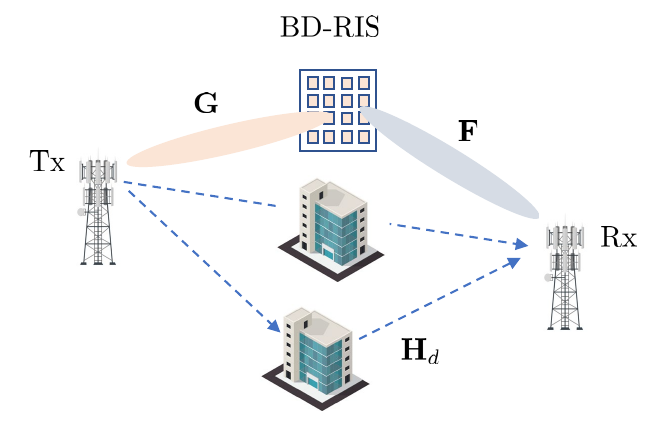}
     \caption{BD-RIS-aided MIMO communication system. The BD-RIS is strategically deployed to have a direct LoS path to the transmitter ($\G$) and receiver ($\F$). The signal from the Tx arrives at the Rx by several NLoS paths shown in dashed line.}
	\label{fig:DeploymentBDRIS}
\end{figure}

Under these assumptions $\H_d$ is a full-rank matrix in \eqref{eq:MIMOchanneleq}, while $\G$ and $\F$ are rank-1 matrices
\begin{equation}
\F = \f_{a} \f_{d}^H \quad {\rm{and}} \quad \G^H  = \g_a \g_d^H. \label{eq:pureLoS}
\end{equation}
Note that $\f_d$ (similarly for $\f_a$, $\g_a$, and $\g_d$) can be modeled as $\f_d = \beta_d \bar{\f}_d$, where the real-valued $\beta_d$ represents the propagation path loss while $\bar{\f}_d$ ($\|\bar{\f}_d\| =1$) captures local variations along the array. Typically, in LoS environments the channel vector $\bar{\f}_d$ is uniquely determined by the array geometry and the angle-of-departure. We consider a half-wavelength uniform linear array (ULA)  
\begin{equation}
\bar{\f}_d(\phi) = \frac{1}{\sqrt{M}} \left( 1, e^{-j\pi \sin(\phi)}, \ldots, e^{-j\pi \sin(\phi) (M-1) } \right)^T,
\label{eq:ULA}
\end{equation}
where $\phi$ is the angle of departure from the BD-RIS to the Rx. The equivalent channel with LoS through RIS components reduces to
\begin{equation}
\H = \H_d +  \left( \f_d^H \Thetab \g_a \right) \f_a \g_d^H = \H_d + \alpha e^{j \theta} \f_a \g_d^H,
\label{eq:Heq_rankone}
\end{equation}
where we have defined $\alpha = \f_d^H \tilde{\Thetab} \g_a$ and $\Thetab = e^{j\theta}\tilde{\Thetab}$. Therefore, the equivalent MIMO channel comprises the direct full-rank MIMO channel plus a rank-1 component scaled by a complex value with modulus $\alpha \geq 0$ and phase $\theta$, which depends on the BD-RIS.  

\section{Optimal BD-RIS and achievable rate}
\label{sec:cap}
In this section, we present the two main results of this work. First, Proposition \ref{prop1new} derives an expression for $\det(\I + \B \B^H)$ as a function of $\det(\I + \A \A^H)$ when $\B = \A +\alpha e^{j \theta} \f \g^H$  is a rank-1 perturbation of $\A$. This algebraic result is then used in Proposition \ref{prop:optCap}  to find the optimal BD-RIS that maximizes the rate in the MIMO scenario shown in Fig. 1.

\begin{proposition}
\label{prop1new}
Let $\B = \A + \alpha e^{j \theta} \f \g^H$ be a rank-1 perturbation of the $n \times m$ complex matrix $\A$. Then,
\begin{equation}
\hspace{-0.1cm}\det(\I + \B\B^H) = \det(\I + \A\A^H)  \left( 1 + \Delta\right), \label{eq:detexp}
\end{equation}
where  $\Delta = Z \alpha^2 + 2\alpha \Re \left ( e^{j \theta}  \gamma_3 \right)$, $Z = |\gamma_3|^2 + \gamma_1 (\|\g\|^2 - \gamma_2)$, and
\begin{subequations}
\begin{align}
    \gamma_1 &= \f^H\left( \I +\A \A^H \right)^{-1} \f, \label{eq:gamma1bis} \\
    \gamma_2 &= \g^H \A^H\left( \I + \A \A^H \right)^{-1} \A \g, \label{eq:gamma2bis} \\
    \gamma_3 &= \g^H \A^H\left( \I + \A \A^H \right)^{-1} \f. \label{eq:gamma3bis}
\end{align}
\end{subequations}
Note that $ \gamma_1 \geq 0$ and $\gamma_2 \geq 0$ are real values, whereas $\gamma_3$ is a complex scalar. 
\end{proposition}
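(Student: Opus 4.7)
The plan is to view $\I+\B\B^H$ as a low-rank perturbation of $\M := \I+\A\A^H$ and apply Sylvester's (Weinstein--Aronszajn) determinant identity
\[
\det(\M + \U\V^H) = \det(\M)\det(\I_2 + \V^H \M^{-1} \U),
\]
where $\U,\V$ will be $n\times 2$ matrices. To set this up, I would first expand
\[
\B\B^H = \A\A^H + \alpha e^{j\theta}\f\g^H\A^H + \alpha e^{-j\theta}\A\g\f^H + \alpha^2\|\g\|^2\f\f^H,
\]
so that $\B\B^H - \A\A^H$ is a rank-(at-most)-two perturbation. The natural choice of column bases for this perturbation is $\{\f,\A\g\}$, which suggests taking $\U = [\,\f,\;\A\g\,]$ and
\[
\V = \bigl[\,\alpha e^{-j\theta}\A\g + \alpha^2\|\g\|^2\f,\;\;\alpha e^{j\theta}\f\,\bigr],
\]
so that $\U\V^H$ reproduces the three cross terms above.

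With this decomposition, the next step is to compute the $2\times 2$ matrix $\V^H\M^{-1}\U$ entry-by-entry. Each of the four entries is a Hermitian bilinear form in $\f$ and $\A\g$ against $\M^{-1}$, and these are exactly the quantities $\gamma_1,\gamma_2,\gamma_3$ defined in \eqref{eq:gamma1bis}--\eqref{eq:gamma3bis}, together with the conjugate $\gamma_3^* = \f^H\M^{-1}\A\g$ (which uses the Hermitian symmetry of $\M^{-1}$). A short calculation gives
\[
\I_2 + \V^H \M^{-1}\U = \begin{bmatrix} 1+\alpha e^{j\theta}\gamma_3 + \alpha^2\|\g\|^2\gamma_1 & \alpha e^{j\theta}\gamma_2 + \alpha^2\|\g\|^2\gamma_3^* \\ \alpha e^{-j\theta}\gamma_1 & 1+\alpha e^{-j\theta}\gamma_3^* \end{bmatrix}.
\]

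Finally, I would compute the $2\times 2$ determinant directly. Two cubic-in-$\alpha$ terms proportional to $\alpha^3\|\g\|^2\gamma_1 e^{-j\theta}\gamma_3^*$ cancel, leaving
\[
1 + \alpha\bigl(e^{j\theta}\gamma_3 + e^{-j\theta}\gamma_3^*\bigr) + \alpha^2\bigl(|\gamma_3|^2 + \gamma_1\|\g\|^2 - \gamma_1\gamma_2\bigr) = 1 + 2\alpha\,\Re(e^{j\theta}\gamma_3) + \alpha^2 Z,
\]
which is exactly $1+\Delta$. Combined with Sylvester's identity this yields \eqref{eq:detexp}. The non-negativity/reality claims for $\gamma_1,\gamma_2$ follow from $\M^{-1}\succ 0$ and the observation that $\A^H\M^{-1}\A$ is Hermitian positive semidefinite.

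The only delicate step is the algebraic bookkeeping: selecting $\U,\V$ so that the three cross terms of $\B\B^H-\A\A^H$ are captured exactly, and then tracking the complex conjugation when evaluating $\V^H\M^{-1}\U$ so that the cubic terms cancel. Everything else is a direct application of the matrix determinant lemma.
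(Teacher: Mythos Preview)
Your proof is correct. The paper's argument in Appendix~A reaches the same identity by a closely related but differently organized route: instead of packaging $\B\B^H-\A\A^H$ as a single rank-two update $\U\V^H$ and invoking the Sylvester/Weinstein--Aronszajn identity once, the paper peels off the two rank-one pieces sequentially. Concretely, it writes $\I+\B\B^H = \C + \alpha e^{-j\theta}\A\g\f^H$ with $\C=\I+\A\A^H+\f{\bf p}^H$, applies the rank-one matrix determinant lemma to extract $\det(\C)$, applies it a second time to reduce $\det(\C)$ to $\det(\I+\A\A^H)$, and then uses the Sherman--Morrison formula to compute $\C^{-1}$ explicitly so that the remaining scalar factor can be evaluated. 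Your single rank-two application collapses these three steps into one $2\times2$ determinant and avoids the matrix inversion lemma altogether, which makes the cancellation of the cubic-in-$\alpha$ terms more transparent. The paper's version, on the other hand, stays entirely within rank-one tools, which some readers may find more elementary. Either way the algebra lands on the same $1+\Delta$.
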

\begin{proof}
 See Appendix A. 
 \end{proof}

Now we apply Proposition \ref{prop1new} to the problem of maximizing the transmission rate in a MIMO channel assisted by a BD-RIS. The Tx sends proper Gaussian signals ${\bf x} \sim \mathcal{CN}({\bf 0}, {\bf R}_{xx})$, where ${\bf R}_{xx} = \Exp[{\bf x}{\bf x}^H]$ denotes the Tx covariance matrix. For a fixed ${\bf R}_{xx}$, the rate maximization problem for a BD-RIS-assisted MIMO link can be formulated as follows 
\begin{equation}
\label{eq:ProbMaxCap}
 {\cal P}_1: \,\max_{\Thetab \in {\cal{T}} }\,\,  \log \det \left( \I +  \frac{1}{\sigma^2}\H {\bf R}_{xx} \H^H \right),
\end{equation}
where $\H$ is the equivalent channel given by \eqref{eq:Heq_rankone} that depends on $\Thetab$, and $\sigma^2$ is the noise variance. The optimal solution of ${\cal P}_1$ is presented in the following proposition.

\begin{proposition}
\label{prop:optCap}
Let us define the rank-2 matrix ${\bf T} = \f_d \g_a^H + \left(\f_d \g_a^H \right)^T$ and compute its singular value decomposition (SVD) as $\T = \U \Lambdab \V^H$. Let us partition the eigenspaces as $\U = [\U_1, \U_2]$ and $\V = [\V_1, \V_2]$, where $\U_1$ (resp.$\V_1$) contains the first two columns of $\U$ (resp. $\V$) corresponding to the signal subspace, and $\U_2$ (resp.$\V_2$) contains the remaining $M-2$ columns corresponding to the null subspace. The optimal solution of ${\cal P}_1$ in \eqref{eq:ProbMaxCap} is
\begin{equation}
    \Thetab_{opt} = e^{j\theta_{opt}} \left( \U_1\V_1^H + \V_2^* \Q_{rot}^* \Q_{rot}^H \V_2^H\right),
    \label{eq:optMIMOcap}
\end{equation}
where 
\begin{equation}
\theta_{opt} = -\angle\ \g_d^H {\bf R}_{xx} \H_d^H \left( \I + \frac{1}{\sigma^2}\H_d {\bf R}_{xx} \H_d^H \right)^{-1}\f_a,
\label{eq:optphase}
\end{equation}
and $\Q_{rot}$ is an arbitrary $(M-2) \times (M-2)$ unitary matrix.
\end{proposition}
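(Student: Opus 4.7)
My plan is to use Proposition~\ref{prop1new} to collapse the matrix problem ${\cal P}_1$ onto the scalar $\alpha e^{j\theta}=\f_d^H\Thetab\g_a$ and then to solve the phase and magnitude subproblems in turn. With $\A = \H_d{\bf R}_{xx}^{1/2}/\sigma$, $\f=\f_a/\sigma$ and $\g={\bf R}_{xx}^{1/2}\g_d$, the channel in \eqref{eq:Heq_rankone} gives $\H{\bf R}_{xx}^{1/2}/\sigma = \A+\alpha e^{j\theta}\f\g^H$, so Proposition~\ref{prop1new} yields
\begin{equation*}
\log\det\!\left(\I+\tfrac{1}{\sigma^2}\H{\bf R}_{xx}\H^H\right) = C_0 + \log(1+\Delta),
\end{equation*}
where $C_0$ is the $\Thetab$-independent direct-link rate and $\Delta=Z\alpha^2+2\alpha\Re(e^{j\theta}\gamma_3)$ depends on $\Thetab$ only through $\alpha$ and $\theta$.

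Next I would note that $Z\ge 0$: by the Woodbury identity $\|\g\|^2-\gamma_2 = \g^H(\I+\A^H\A)^{-1}\g\ge 0$. Hence for any fixed $\alpha\ge 0$, $\Delta$ is maximized over $\theta$ by aligning $e^{j\theta}\gamma_3$ with the positive real axis, giving $\theta_{opt}=-\angle\gamma_3$; substituting the definition of $\gamma_3$ reproduces \eqref{eq:optphase}. Since $\Delta=Z\alpha^2+2\alpha|\gamma_3|$ is then monotone increasing in $\alpha$, the remaining task is $\max_{\Thetab\in{\cal T}}|\f_d^H\Thetab\g_a|$.

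For the magnitude subproblem I would exploit $\Thetab=\Thetab^T$ to symmetrize
\begin{equation*}
\f_d^H\Thetab\g_a=\tfrac{1}{2}\bigl(\f_d^H\Thetab\g_a+\g_a^T\Thetab\f_d^*\bigr)=\tfrac{1}{2}\tr(\Thetab\T^H),
\end{equation*}
where $\T$ is the symmetric rank-2 matrix in the statement. Von Neumann's trace inequality with $\Thetab$ unitary gives $\alpha\le\tfrac{1}{2}(\sigma_1(\T)+\sigma_2(\T))$. Because $\T=\T^T$, the SVD can be chosen in Takagi form so that $\U_1=\V_1^*$ on the signal subspace, and I would then check that the candidate $\tilde{\Thetab}=\U_1\V_1^H+\V_2^*\Q_{rot}^*\Q_{rot}^H\V_2^H$ (i) is symmetric, using $(\Q_{rot}^*\Q_{rot}^H)^T=\Q_{rot}^*\Q_{rot}^H$; (ii) is unitary, since $\V^T\V^*=\I$ reduces $\tilde{\Thetab}^H\tilde{\Thetab}$ to the identity; and (iii) attains the trace bound, because $\V_2^H\V_1=0$ collapses $\tr(\tilde{\Thetab}\T^H)$ to $\tr(\Sigmab_1)=\sigma_1+\sigma_2$. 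Multiplying $\tilde{\Thetab}$ by $e^{j\theta_{opt}}$ preserves ${\cal T}$ and yields \eqref{eq:optMIMOcap}.

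The main obstacle is this last step: the classical unitary maximizer $\U\V^H$ of $|\tr(\Thetab\T^H)|$ is not generally symmetric, so I must explicitly build a symmetric unitary matrix that still saturates the trace bound. The Takagi coupling $\U_1=\V_1^*$ makes the two-dimensional signal block automatically symmetric, and the parametrization through the arbitrary unitary $\Q_{rot}$ captures all symmetric-unitary completions inside the $(M{-}2)$-dimensional null space of $\T$, explaining the non-uniqueness of $\Thetab_{opt}$ when $M>2$.
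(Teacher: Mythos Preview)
Your proof is correct and follows the same route as the paper: apply Proposition~\ref{prop1new} to reduce ${\cal P}_1$ to the scalar pair $(\alpha,\theta)$, optimize the phase as $\theta_{opt}=-\angle\gamma_3$, and then maximize $\alpha=|\f_d^H\Thetab\g_a|$ via the Takagi/SVD structure of the symmetric rank-2 matrix $\T$. Your argument is in fact slightly more self-contained than the paper's: you obtain $\|\g\|^2-\gamma_2=\g^H(\I+\A^H\A)^{-1}\g\ge 0$ via the push-through identity where the paper diagonalizes $\A$, and you verify directly with the von Neumann trace inequality and an explicit symmetry/unitarity check that the stated $\tilde{\Thetab}$ saturates the bound, whereas the paper outsources this step to \cite{SantamariaSPLetters2023,MaoCL2024}.
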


\begin{proof}
Let us define $\B= \H {\bf R}_{xx}^{1/2}/\sigma$, $\A = \H_d {\bf R}_{xx}^{1/2}/\sigma$, $\g = {\bf R}_{xx}^{H/2} \g_d/\sigma$, and $\f = \f_a$. With these definitions, the Tx covariance matrix is absorbed in the equivalent channel and we have $\B = \A + \alpha e^{j \theta} \f \g^H$ with $\f_d^H \Thetab \g_a = \alpha e^{j\theta}$. Therefore, a direct application of Proposition \ref{prop1new} yields the following rate expression for a BD-RIS-assisted MIMO link
\begin{equation*}
    \det \left( \I +  \frac{1}{\sigma^2}\H {\bf R}_{xx} \H^H \right) =  \det \left( \I +  \A \A^H \right)(1+\Delta),
\end{equation*}
with $\Delta$ defined as in Proposition \ref{prop1new}. In the previous expression $\log \det \left( \I +  \A \A^H \right)$ is the rate without BD-RIS and $\log(1+\Delta)$ is the rate gain provided by the BD-RIS. Therefore, the optimal BD-RIS $\f_d^H \Thetab \g_a = \alpha e^{j\theta}$ solves the following problem
\begin{equation}
\label{eq:ProbMaxCap1}
\,\max_{\Thetab \in {\cal{T}} }\,\,  \Delta = \max_{\alpha, \theta }\,\, Z \alpha^2\,+ 2 \, \alpha \Re \left ( e^{j \theta}  \gamma_3 \right),
\end{equation}
where, according to Proposition \ref{prop1new}, $Z =|\gamma_3|^2 +\gamma_1 (\|\g\|^2 - \gamma_2)$. Next, we prove that $Z >0$. Both $|\gamma_3|^2$ and $\gamma_1$, given by \eqref{eq:gamma1bis}, are positive values, so we only have to show that $\gamma_2 \leq \|\g\|^2$. Let $\A = \U_a \Lambdab_a \V_a^H$ be the SVD of $\A$, where $\Lambdab_a = \diag(\lambda_1,\ldots,\lambda_n)$ and $n = \rank(\A)$. From \eqref{eq:gamma2bis}, we can write
\begin{equation*}
 \gamma_2 =\g^H \V_a \diag\left(\frac{\lambda_1^2}{1+\lambda_1^2}\ldots, \frac{\lambda_{n}^2}{1+\lambda_{n}^2}\right) \V_a^H \g,
\end{equation*}
where all values in the diagonal matrix are smaller than one, which implies $\gamma_2 \leq \|\g\|^2$ thus proving that $Z > 0$. Note that this result is valid for any Tx covariance matrix ${\bf R}_{xx}$.

Since $Z$ is a positive value, the optimal phase that maximizes \eqref{eq:ProbMaxCap1} is 
\begin{equation}
    \theta_{opt} = -\angle\,\gamma_3 =  -\angle \g^H \A^H\left( \I + \A \A^H \right)^{-1} \f.
    \label{eq:phaseopt}
\end{equation}
Substituting in \eqref{eq:phaseopt} $\A = \H_d {\bf R}_{xx}^{1/2}/\sigma$, $\g = {\bf R}_{xx}^{H/2} \g_d/\sigma$, $\f = \f_a$, and scaling the resulting expression by $\sigma^2$, which does not change the phase, we get the optimal phase in \eqref{eq:optphase}. Therefore, maximizing the achievable rate amounts to maximizing $\alpha$. The maximum value is $\alpha_{BD-RIS} = \|\f_d\| \|\g_a\|$, which is achieved by a BD-RIS obtained from the Takagi's factorization of ${\bf T} = \f_a \g_d^H + (\f_a \g_d^H)^T$ as proved in \cite{SantamariaSPLetters2023}. Equivalently, the solution can be written using the SVD of $\T$ as described in Proposition \ref{prop:optCap} (see also \cite{MaoCL2024}). This completes the proof.
\end{proof}
Proposition 2 can be applied in an alternating optimization procedure summarized in Algorithm 1 that optimizes ${\bf R}_{xx}$ via waterfilling over the equivalent channel $\H_{eq}$ for a given BD-RIS, and then optimizes the BD-RIS for the new ${\bf R}_{xx}$. The procedure is initialized with an isotropic matrix ${\bf R}_{xx} = \frac{P_t}{N_T} \I_{N_T}$ where $P_t$ denotes the Tx power.

In \eqref{eq:optMIMOcap}, the solution with $\Q_{rot} = {\bf 0}$ leads to a rank-2 lossy BD-RIS matrix ($\Thetab_{opt}^H \Thetab_{opt} \prec \I$) that optimally reflects the incident signal from the forward direction $\g_a$ to the backward direction $\f_d$. That is, optimal BD-RIS performance can be achieved in this scenario with a reflected power lower than the incident power. It follows from Proposition \ref{prop:optCap} that the rate improvement of an optimal BD-RIS as compared to a scenario without RIS is
\begin{equation}
\hspace{-0.2cm}\log(1 +\Delta) = \log \left(1 + \|\f_d\|^2 \|\g_a\|^2 Z + 2 \|\f_d\| \|\g_a\| |\gamma_3| \right).
   \label{eq:DeltaBDRIS}
\end{equation}

\begin{algorithm}[!t]
\small
\DontPrintSemicolon
\SetAlgoVlined
\KwIn{Initial $\H, \F = \f_{a} \f_{d}^H, \G  = \g_d \g_a^H$, and  ${\bf R}_{xx} = \frac{P_t}{N_T} \I_{N_T}$, noise variance estimate $\sigma^2$}
\KwOut{Final BD-RIS $\Thetab$ and ${\bf R}_{xx}$}
{Compute $\T = \f_d \g_a^H + \left(\f_d \g_a^H \right)^T$} \;
{Compute $\T = \U \Lambdab \V^H$ and partition $\U = [\U_1, \U_2]$ and $\V = [\V_1, \V_2]$ as in Proposition \ref{prop:optCap} }\;
{Generate a random unitary $\Q_{rot} \in {{\cal{U}}(M-2)}$ }\;
{Obtain  $\tilde{\Thetab} = \left( \U_1\V_1^H + \V_2^* \Q_{rot}^* \Q_{rot}^H \V_2^H\right)$}\;
\While{Convergence criterion not true}
{
Estimate $\theta_{opt}$ as \eqref{eq:optphase} and $\Thetab = e^{j\theta_{opt}} \tilde{\Thetab}$ \;
Obtain ${\bf R}_{xx}$ via waterfilling over $\H_d + \F \Thetab \G^H$
}
\caption{{\small Proposed BD-RIS with optimal ${\bf R}_{xx}$}}
\label{alg:BDRIS}
\end{algorithm}
  
\begin{remark}[{\bf Diagonal RIS}] The optimal solution for a diagonal RIS can also be expressed as $\Thetab = e^{j\theta_{opt}}\tilde{\Thetab}$, with the optimal phase $\theta_{opt}$ given by \eqref{eq:phaseopt}. The maximum achievable $\alpha$ for an RIS is $\alpha_{RIS} = \| \f_d^* \odot \g_a  \|_1$, which is attained by $\tilde{\Thetab} = \diag(e^{j \tilde{\theta}_1},\ldots, e^{j \tilde{\theta}_M})$ with phases $\tilde{\theta}_m = -\angle \f_d(m)^*\g_a(m)$, $m=1,\ldots, M$. 
\end{remark}
In the case of pure $\f_d$ and $\g_a$ LoS channels defined as in \eqref{eq:ULA}, $\| \f_d^* \odot \g_a  \|_1 = \|\f_d\| \|\g_a\|$ ($\alpha_{RIS} = \alpha_{BD-RIS}$) and, therefore, a diagonal RIS achieves the same rate gain as a BD-RIS, thus corroborating the results in \cite{ClerckxTWC22a}. Even so, deploying a BD-RIS can bring advantages that make it an interesting alternative. For example, using $\Q_{rot} = {\bf 0}$, a BD-RIS achieves greater directionality in the reflected signal thus reducing interference in unwanted directions. Furthermore, as we will see in the simulations, the BD-RIS solution performs better than the conventional diagonal RIS in the more realistic scenario where forward and backward channels are Rician. 

\begin{remark}[{\bf Group-connected BD-RIS}] To reduce the computational and circuit complexity of the fully-connected BD-RIS architecture, a group-connected architecture was proposed in \cite{ClerckxTWC22a, ClerckxTWC22b}. The $M$ reflective elements of a group-connected BD-RIS are partitioned into $G$ groups, each of $M_g = M/G$ elements. The elements of each group are fully connected but disconnected from the other groups. The result for the fully-connected BD-RIS extends directly to the group-connected architecture for which the scattering matrix is a block-diagonal matrix $\Thetab = \blkdiag(\Thetab_1,\ldots, \Thetab_G)$. For a group-connected BD-RIS the equivalent channel in \eqref{eq:Heq_rankone} can be expressed as
\begin{equation}
    \H = \H_d + e^{j \theta} \left (\sum_{g=1}^G \alpha_g e^{j \theta_g} \right) \f_a \g_d^H,
\end{equation}
where $\theta$ is a common phase term for the BD-RIS elements and $\alpha_g e^{j \theta_g} = \f_{d,g}^H \Thetab_g \g_{a,g}$ represents the response of the $g$th group. The group-connected BD-RIS that maximizes the achievable rate has a common phase $\theta$ given by \eqref{eq:optphase}, group phases $\theta_g = 0$, $g=1,\ldots, G$ that make the amplitudes $\alpha_g$ add up coherently, and $\alpha_g$ values maximized as described in Proposition \ref{prop:optCap} from the Takagi factorization of the rank-2 matrices ${\bf T}_g = \f_{d,g} \g_{a,g}^H + \left(\f_{d,g} \g_{a,g}^H \right)^T$.
\end{remark}


\section{Simulation Results}
\label{sec:simBDRIS}
We consider a MIMO system aided by a fully-connected BD-RIS in which the Tx has coordinates (0, 0, 3) [m], the Rx is located at (200, 200, 1.5) [m], and the BD-RIS is at (20, 20, 20) [m]. The path loss is $PL = PL_0 -\beta 10 \log_{10} d$, where  $PL_{0} = -28$ dB is the path loss at a reference distance of $d_0 = 1$ meter and $\beta$ is the path loss exponent. We assume the direct MIMO channel, $\H_d$, is not blocked and has a rich multipath (Rayleigh) with a path loss exponent $\beta = 3.75$. The path loss exponent of the forward and backward BD-RIS channels is $\beta = 2$. Despite $\H_d$ not being fully blocked, the Tx-RIS-Rx has a Frobenius norm larger than that of the direct link, therefore, it contributes more significantly to the received signal power. The power spectral density for the additive noise is $\sigma^2 =-174 + 10\log_{10}B +F$ (dBm). The bandwidth is $B = 20$ MHz, and the noise factor $F= 10$ dB. 

In the first experiment, the MIMO system is $4 \times 4$, the transmitted power is $P_t=30$ dBm, and the forward and backward channels are pure LoS. Fig. \ref{fig:DeltaRatePt} shows the achievable rate for i) the proposed BD-RIS with optimal ${\bf R}_{xx}$ (Alg. 1); ii) the proposed BD-RIS with isotropic ${\bf R}_{xx}$; iii) the MIMO beamforming method in \cite{NeriniTWC2023}; and iv) a random BD-RIS with isotropic ${\bf R}_{xx}$. Fig. \ref{fig:DeltaRatePt} shows the BD-RIS designed for LoS channels with either optimized or isotropic Tx covariance matrix improves a MIMO beamforming scheme \cite{NeriniTWC2023}. This suggests that, even under a weak direct channel, a rank-1 Tx-RIS-Rx link may allow the transmission of a second stream over the equivalent channel.

\begin{figure}
    \centering
\includegraphics[width=.5\textwidth]{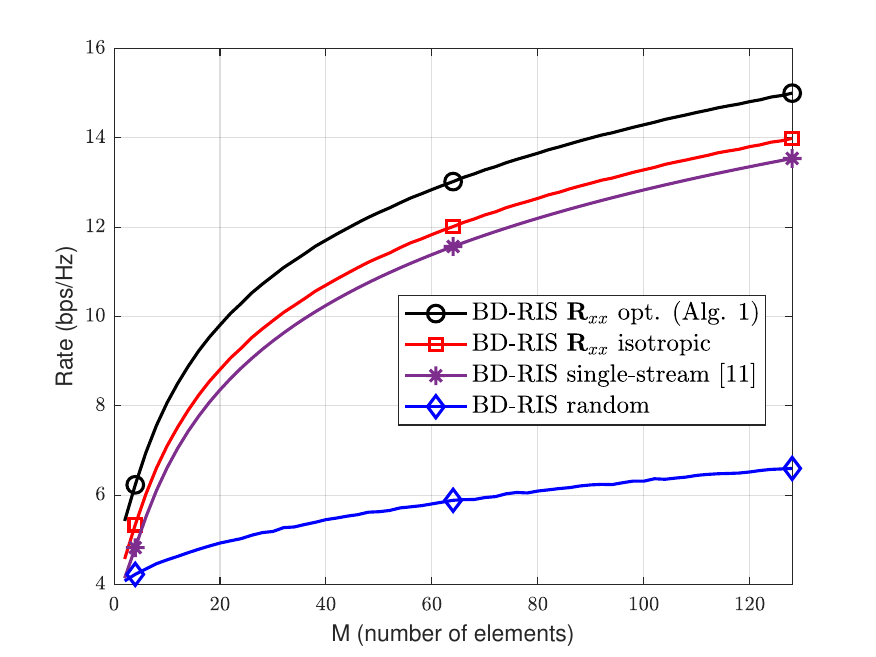}
     \caption{Achievable rate in a $4 \times 4$ MIMO system with pure LoS forward and backward channels for i) the proposed BD-RIS with optimal ${\bf R}_{xx}$ (Alg. 1); ii) the proposed BD-RIS with isotropic ${\bf R}_{xx}$; iii) the single-stream iterative method in \cite{NeriniTWC2023}; and iv) a random BD-RIS with isotropic ${\bf R}_{xx}$.}
	\label{fig:DeltaRatePt}
\end{figure}

In the second experiment, we evaluate the performance under Ricean forward and backward channels generated as
\begin{equation}
    \F = \sqrt{\frac{K}{1+K}} \F_{LoS} + \sqrt{\frac{1}{1+K}} \F_{NLoS},
\end{equation}
where $K$ is the Ricean factor that measures the relative strength between the direct, $ \F_{LoS}$ (modeled as a rank-1 pure LoS channel) component, and the scattered signal component, $\F_{NLoS}$ (modeled as a Rayleigh channel). The path loss exponent for the Ricean $\F$ and $\G$ channels is  $\beta = 2$, $P_t= 10 $ dBm, representing a moderate SNR situation, and the number of BD-RIS elements is $M=64$. The closed-form solutions for a BD-RIS proposed in \eqref{eq:optMIMOcap} and for an RIS are labeled as {\bf BD-RIS (LoS)} and {\bf RIS (LoS)} in the figure. In both cases, the transmit covariance matrix is obtained by solving a water-filling problem over the equivalent channel eigenmodes once the BD-RIS/RIS has been designed. As a comparison, we consider the following methods: 

\begin{itemize}
    \item {\bf BD-RIS (NLoS)}. This solution maximizes the capacity in a BD-RIS-assisted MIMO link in an NLoS scenario using the iterative algorithm in \cite{SantamariaSPAWC24}. 

    \item {\bf BD-RIS (non-rec. NLoS)}. The closed-form solution recently proposed in \cite{Emil2024Arxiv} for a scenario assisted by a passive but non-reciprocal (non-rec.) BD-RIS (i.e., a unitary but not symmetric BD-RIS matrix).

\item {\bf Random BD-RIS/RIS}. A random unitary and symmetric BD-RIS (or an RIS with random phases).

\item {\bf No RIS}. This is the scenario without RIS. The optimal transmit covariance matrix for the direct link $\H_d$ is used.
    
\end{itemize}

Fig. \ref{fig:R2x2M64P10bis} shows the achievable rates obtained by the different schemes for increasing values of the Ricean factor from $K=0$ (Rayleigh channels) to $K=10$ (channels with a dominant LoS component). For $K\geq1$, although the channels are far from being pure LoS, the closed-form BD-RIS solution is almost optimal, providing the same rate as the much more computationally expensive iterative solution of \cite{SantamariaSPAWC24}. The explanation is that, in this moderate-SNR scenario with only two antennas, the optimal transmission scheme is single-stream in many simulations. More significant differences are expected as the number of antennas and the SNR increase, or when the direct link is stronger. The BD-RIS solution outperforms the diagonal RIS, but both solutions tend to behave the same as $K$ increases. Finally, the solution proposed in \cite{Emil2024Arxiv} is always worse than the iterative design proposed in \cite{SantamariaSPAWC24}, being the BD-RIS solution in \cite{Emil2024Arxiv} less restrictive. A possible explanation is that the unitary solution in \cite{Emil2024Arxiv} assumes a blocked direct channel. This direct channel is not predominant in the considered scenario, but it influences the final result. 

\begin{figure}
    \centering
\includegraphics[width=.5\textwidth]{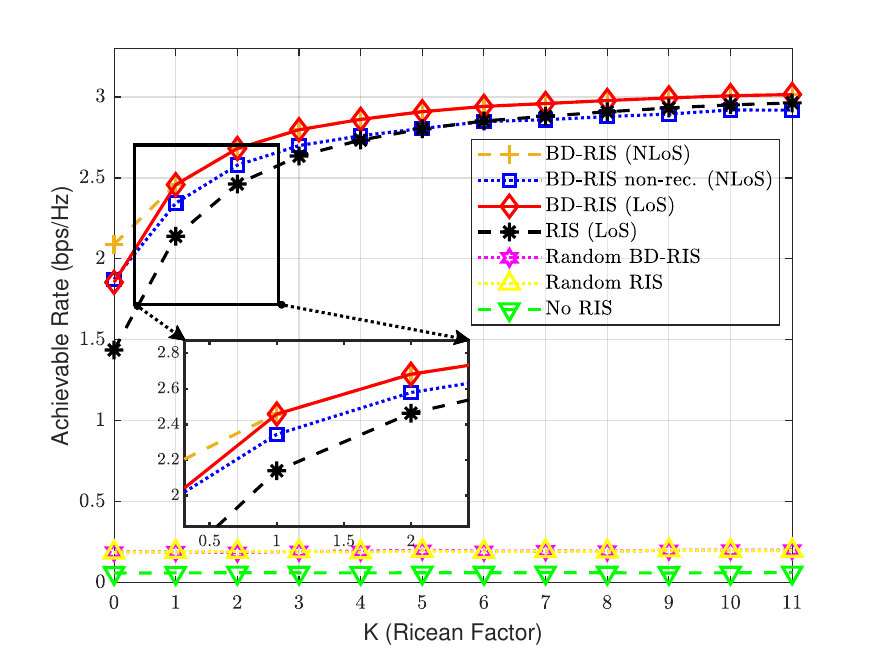}
     \caption{Achievable rate vs $K$ (Ricean factor for the BD-RIS channels) for several competing schemes. }
	\label{fig:R2x2M64P10bis}
\end{figure}

\section{Conclusion}
When both the forward and backward BD-RIS channels are LoS, we derived a closed-form solution for the rate maximization problem. The optimal solution has invariances that can be exploited to reduce the energy reflected by the BD-RIS, thus improving the energy efficiency in comparison to a diagonal RIS. Under forward and backward Ricean channels, the BD-RIS solution outperforms the diagonal RIS solution and it is competitive with iterative solutions when the direct MIMO link is weak. The extension of these results to multi-user scenarios is an interesting line of future work.

\section*{Appendix A: Proof of Proposition \ref{prop1new}}
\label{AppendixA}
Let $\B = \A + \alpha e^{j \theta} \f \g^H$. We want to show that
\begin{equation}
\label{eq:detexpression}
\det \left( \I + \B \B^H \right) = \det \left( \I + \A \A^H \right)  \left( 1 +   \,\Delta \right),
\end{equation}
with $\Delta$ defined as in Proposition \ref{prop1new}. Let us start by expanding
\begin{equation*}
    \B\B^H = \A\A^H + \f {\bf p}^H + \alpha e^{-j\theta}\A \g \f^H, 
\end{equation*}
where we have defined ${\bf p} = \alpha^2 \|\g\|^2 \f + \alpha e^{-j\theta} \A \g$. Defining $\C = \I + \A \A^H + \f {\bf p}^H$ and applying the matrix determinant lemma \cite[Sec. B.5.1]{Coherence} we get
\begin{equation}
\label{eq:expr1}
\det(\I+ \B \B^H ) = \det (\C) \left(1 + \, \alpha \, e^{-j\theta} \f^H \C^{-1} \A\g \right).
\end{equation}
A new application of the matrix determinant lemma, this time on the determinant of the matrix $\C$, gives us
\begin{equation}
\label{eq:matA}
   \det(\C) =  \det(\I + \A \A^H) \left( 1 + \, {\bf p}^H (\I + \A \A^H)^{-1} \f \right).
\end{equation}
Substituting ${\bf p} = \alpha^2 \|\g\|^2 \f + \alpha e^{-j\theta} \A\g$ in \eqref{eq:matA} we have
\begin{equation}
\label{eq:matAbis}
   \det(\C) =   \det(\I + \A \A^H) \left( 1 + \alpha^2 \|\g\|^2 \gamma_1 + \alpha e^{j\theta} \gamma_3   \right),
\end{equation}
where we have defined $\gamma_1 = \f^H\left( \I +\A \A^H \right)^{-1} \f$ and ${\gamma_3 = \g^H \A^H\left( \I + \A \A^H \right)^{-1} \f}$.

It remains to compute the second term in \eqref{eq:expr1}. First, we apply the matrix inversion lemma \cite[Sec. B.4.2]{Coherence} to compute $\C^{-1}$
\begin{equation}
    \C^{-1} = \E^{-1} - \frac{\E^{-1}\f {\bf p}^H \E^{-1}}{\left(1 + {\bf p}^H \E^{-1}\f \right)},
\end{equation}
where we have defined $\E = \I + \A \A^H$. Then, we have
\begin{multline}
   \left(1 +\, \alpha \, e^{-j\theta} \f^H \C^{-1} \A \g \right) = 1 +  \, \alpha \, e^{-j\theta} \gamma_3^*  \\
   -  \, \alpha \, e^{-j\theta} \frac{\gamma_1 (\alpha^2 \|\g\|^2 \gamma_3^* + \alpha e^{j \theta }\gamma_2)}{\left(1 + {\bf p}^H \E^{-1}\f \right)}, \label{eq:expre2}
\end{multline}
where we have defined $\gamma_2 = \A^H \g^H (\I + \A \A^H)^{-1} \A \g$. Substituting \eqref{eq:expre2} into \eqref{eq:expr1} we finally obtain \eqref{eq:detexpression}.

\addcontentsline{toc}{section}{Appendices}
\renewcommand{\thesubsection}{\Alph{subsection}}

\balance
\bibliographystyle{IEEEtran}
\bibliography{Main}
\end{document}